\def\R{{\mathbb R}}
\theoremstyle{thmstyleone}%
\newtheorem{theorem}{Theorem}%  meant for continuous numbers
\theoremstyle{thmstyletwo}%
\newtheorem{lemma}{Lemma}%
\theoremstyle{thmstylethree}%
\newcommand{\norm}[1]{\|#1\|} % for ordinary vectors
\newcommand\numberthis{\addtocounter{equation}{1}\tag{\theequation}}
\begin{document}

\title[Preprint]{A Closed-Form Bound on the Asymptotic Linear Convergence of Iterative Methods via Fixed Point Analysis}

%%=============================================================%%
%% Prefix	-> \pfx{Dr}
%% GivenName	-> \fnm{Joergen W.}
%% Particle	-> \spfx{van der} -> surname prefix
%% FamilyName	-> \sur{Ploeg}
%% Suffix	-> \sfx{IV}
%% NatureName	-> \tanm{Poet Laureate} -> Title after name
%% Degrees	-> \dgr{MSc, PhD}
%% \author*[1,2]{\pfx{Dr} \fnm{Joergen W.} \spfx{van der} \sur{Ploeg} \sfx{IV} \tanm{Poet Laureate} 
%%                 \dgr{MSc, PhD}}\email{iauthor@gmail.com}
%%=============================================================%%

\author*[1]{\fnm{Trung} \sur{Vu}}\email{vutru@oregonstate.edu}

\author[1]{\fnm{Raviv} \sur{Raich}}\email{raich@eecs.oregonstate.edu}
% \equalcont{These authors contributed equally to this work.}

\affil*[1]{\orgdiv{School of Electrical Engineering and Computer Science}, \orgname{Oregon State University}, \orgaddress{\city{Corvallis}, \postcode{97331-5501}, \state{Oregon}, \country{USA}}}

% \affil[2]{\orgdiv{Department}, \orgname{Organization}, \orgaddress{\street{Street}, \city{City}, \postcode{10587}, \state{State}, \country{Country}}}

%%==================================%%
%% sample for unstructured abstract %%
%%==================================%%

\abstract{
In many iterative optimization methods, fixed-point theory enables the analysis of the convergence rate via the contraction factor associated with the linear approximation of the fixed-point operator.
While this factor characterizes the asymptotic linear rate of convergence, it does not explain the non-linear behavior of these algorithms in the non-asymptotic regime.
In this letter, we take into account the effect of the first-order approximation error and present a closed-form bound on the convergence in terms of the number of iterations required for the distance between the iterate and the limit point to reach an arbitrarily small fraction of the initial distance.
Our bound includes two terms: one corresponds to the number of iterations required for the linearized version of the fixed-point operator and the other corresponds to the overhead associated with the approximation error. 
With a focus on the convergence in the scalar case, the tightness of the proposed bound is proven for positively quadratic first-order difference equations.
}

\keywords{non-linear difference equations, asymptotic linear convergence, convergence bounds, fixed-point iterations}

%%\pacs[JEL Classification]{D8, H51}

%%\pacs[MSC Classification]{35A01, 65L10, 65L12, 65L20, 65L70}

\maketitle

\section{Introduction}

Many iterative optimization methods, such as gradient descent and alternating projections, can be interpreted as fixed-point iterations \cite{polyak1964some,saigal1978efficient,walker2011anderson,jung2017fixed}.
Such methods consist of the construction of a series $\{\bm x^{(k)}\}_{k=0}^\infty \subset \R^n$ generated by
\begin{align} \label{equ:fixed}
    \bm x^{(k+1)} = \mathcal{F}(\bm x^{(k)}) ,
\end{align}
where the fixed-point operator $\mathcal{F}$ is an endomorphism on $\R^n$.
By the fixed-point theorem \cite{brouwer1911abbildung,banach1922operations,lambers2019explorations}, if the Jacobian of $\mathcal{F}$ is bounded uniformly, in the matrix norm $\norm{\cdot}_2$ induced by the Euclidean norm for vectors $\norm{\cdot}$, by $\rho \in (0,1)$, the sequence $\{\bm x^{(k)}\}_{k=0}^\infty$ generated by (\ref{equ:fixed}) converges locally to a fixed-point $\bm x^*$ of $\mathcal{F}$ at a linear rate $\rho$, i.e., $\norm{\bm x^{(k+1)} - \bm x^*} \leq \rho \norm{\bm x^{(k)} - \bm x^*}$ for all integer $k$.\footnote{$\norm{\cdot}$ denotes the Euclidean norm.}
Assume that $\mathcal{F}$ is differentiable at $\bm x^*$ and admits the first-order expansion \cite{roberts1969derivative}
\begin{align*}
    \mathcal{F}(\bm x^{(k)}) = \mathcal{F}(\bm x^*) + \mathcal{T} (\bm x^{(k)} - \bm x^*) + \bm q(\bm x^{(k)} - \bm x^*) ,
\end{align*}
where $\mathcal{T} : \R^n \to \R^n$ is the derivative of $\mathcal{F}$ at $\bm x^*$ and $\bm q : \R^n \to \R^n$ is the residual satisfying $\limsup_{\bm \delta \to 0} \norm{\bm q(\bm \delta)} / \norm{\bm \delta} = 0$.
Then, denoting the error at the $k$-th iteration as $\bm \delta^{(k)} = \bm x^{(k)} - \bm x^*$, the fixed-point iteration (\ref{equ:fixed}) can be viewed as a non-linear but approximately linear difference equation
\begin{align} \label{equ:nonlinear}
    \bm \delta^{(k+1)} = \mathcal{T} (\bm \delta^{(k)}) + \bm q(\bm \delta^{(k)}) .
\end{align}
The stability of non-linear difference equations of form (\ref{equ:nonlinear}) has been studied by Polyak \cite{polyak1964some} in 1964, extending the result from the continuous domain \cite{bellman1953stability}.
In particular, the author showed that if the spectral radius of $\mathcal{T}$, denoted by $\rho(\mathcal{T})$, is strictly less than $1$, then for arbitrarily small $\zeta>0$, there exists a constant $C(\zeta)$ such that $\norm{\bm \delta^{(k)}} \leq C(\zeta) \norm{\bm \delta^{(0)}} (\rho(\mathcal{T}) + \zeta)^k$ with sufficiently small $\norm{\bm \delta^{(0)}}$. While this result characterizes the asymptotic linear convergence of (\ref{equ:nonlinear}), it does not specify the exact conditions on how small $\norm{\bm \delta^{(0)}}$ is as well as how large the factor $C(\zeta)$ is.

This letter develops a more elaborate approach to analyze the convergence of (\ref{equ:nonlinear}) that offers, in addition to the asymptotic linear rate $\rho({\cal T})$, both the region of convergence (i.e., a set ${\cal S}$ such that for any ${\bm \delta}^{(0)} \in {\cal S}$ we have $\lim_{k \to \infty} \norm{{\bm \delta}^{(k)}} = 0$) and a tight closed-form bound on $H(\epsilon)$ defined as the smallest integer guaranteeing $\norm{\bm \delta^{(k)}} \le \epsilon \norm{\bm \delta^{(0)}}$ for $0<\epsilon<1$ and all $k \ge H(\epsilon)$.
We begin with the scalar version of (\ref{equ:nonlinear}) in which the residual term ${q}(\delta)$ is replaced with an exact quadratic function of $\delta$ and then extend the result to the original vector case.
In the first step, we study the convergence of the sequence $\{a_k\}_{k=0}^\infty \subset \mathbb{R}$, generated by the following quadratic first-order difference equation
\begin{align} \label{equ:scalar}
a_{k+1} = \rho a_k + q a_k^2 ,  
\end{align}
where $a_0>0$, $0<\rho<1$, and $q \geq 0$ are real scalars. 
In the second step, we consider the sequence $\{a_k\}_{k=0}^\infty$ obtained by (\ref{equ:scalar}) with $\rho=\rho({\cal T})$ and $q=\sup_{\bm \delta \in \R^n} \frac{\norm{{\bm q}(\bm \delta)}}{\norm{\bm \delta}^2}$ as an upper bound for the sequence $\{\norm{\bm \delta^{(k)}}\}_{k=0}^{\infty}$.
In this letter, we focus on the former step while the latter step is obtained using a more straightforward derivation. 

In analyzing the convergence of $\{a_k\}_{k=0}^\infty$, we focus on tightly characterizing $K(\epsilon)$ (for $0<\epsilon<1$), which is defined as the smallest integer such that  $a_k \leq \epsilon a_0$ for all $k \ge K(\epsilon)$. The value of $K(\epsilon)$ serves as an upper bound on $H(\epsilon)$.
When $q=0$, (\ref{equ:scalar}) becomes a linear first-order difference equation and $\{a_k\}_{k=0}^\infty$ converges uniformly to $0$ at a \textbf{linear rate} $\rho$. In particular, $a_{k+1}=\rho a_k$ implies $a_k=a_0 \rho^k$ for any non-negative integer $k$.  Then, for $q=0$, an exact expression of $K(\epsilon)$ can be obtained in closed-form as
\begin{align} \label{equ:k_uniform}
    K(\epsilon) = \Bigl \lceil \frac{\log(1/\epsilon)}{\log(1/\rho)} \Bigr \rceil .
\end{align}
When $q>0$, the sequence $\{a_k\}_{k=0}^\infty$ either converges, diverges or remains constant depending on the initial value $a_0$:
\begin{enumerate}
\item If $a_0>(1-\rho)/q$, then $\{a_k\}_{k=0}^\infty$ diverges.
\item If $a_0=(1-\rho)/q$, then $a_k=(1-\rho)/q$ for all $k \in {\mathbb N}$.
\item If $a_0<(1-\rho)/q$, then $\{a_k\}_{k=0}^\infty$ converges to $0$ monotonically.
\end{enumerate}
We are interested in the convergence of the sequence $\{a_k\}_{k=0}^\infty$ for $a_0<(1-\rho)/q$. 
In the asymptotic regime ($a_k$ small), the convergence is almost linear since the first-order term $\rho a_k$ dominates the second-order term $q a_k^2$. 
In the early stage ($a_k$ large), on the other hand, the convergence is non-linear due to the strong effect of $q a_k^2$.
In addition, when $\rho \to 0$, one would expect $\{a_k\}_{k=0}^\infty$ enjoys a fast quadratic convergence as $q a_k^2$ dominates $\rho a_k$.
On the other end of the spectrum, when $\rho \to 1$, we observe that the convergence is even slower than linear, making it more challenging to estimate $K(\epsilon)$.

\section{Asymptotic Convergence in the Scalar Case}

\begin{figure}[t]
    \centering
    \begin{subfigure}[b]{0.49\textwidth}
        \centering
        \includegraphics[width=\textwidth]{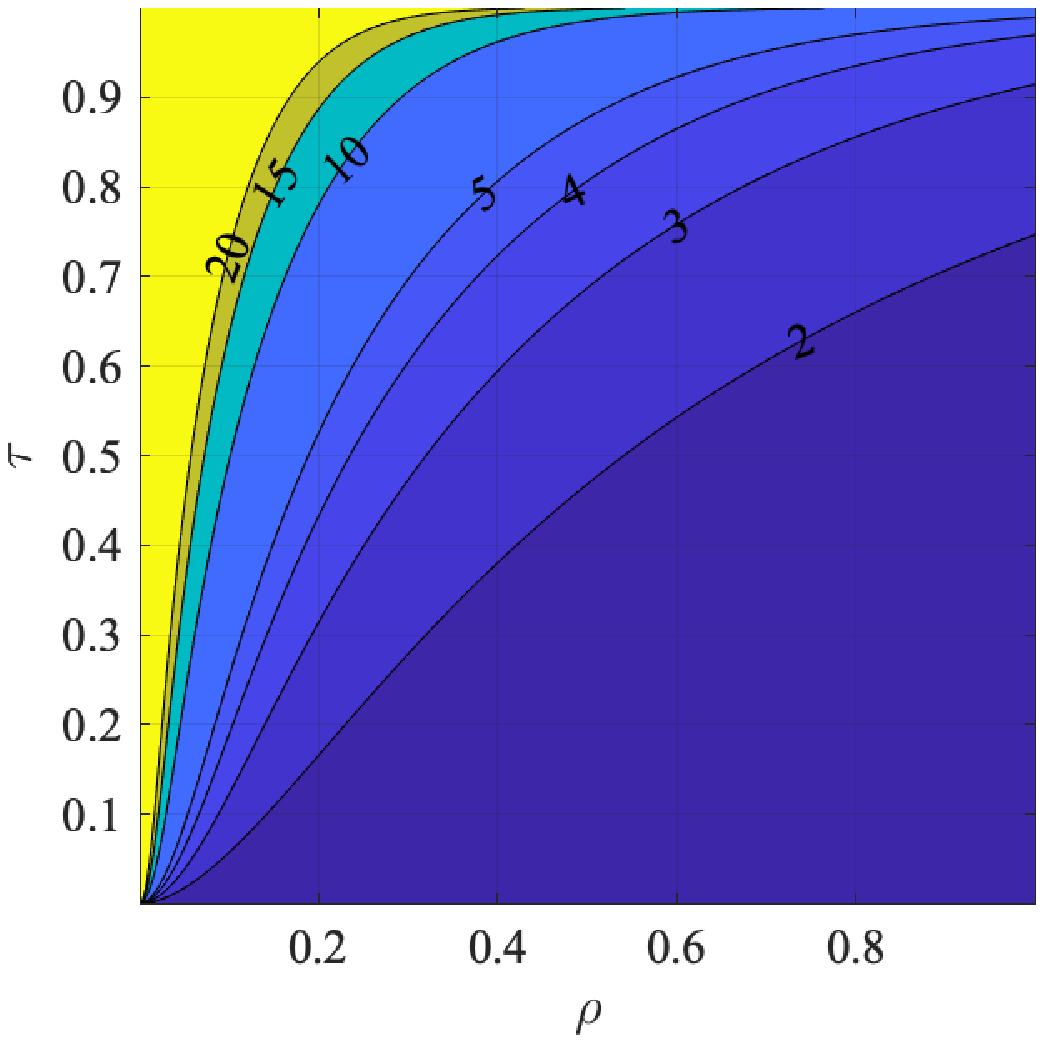}
        % \caption{$c(\rho,\rho+\tau (1-\rho))/\log(1/\rho)$ as a function of $\rho$ and $\tau$}
    \end{subfigure}
    \begin{subfigure}[b]{0.49\textwidth}
        \centering
        \includegraphics[width=\textwidth]{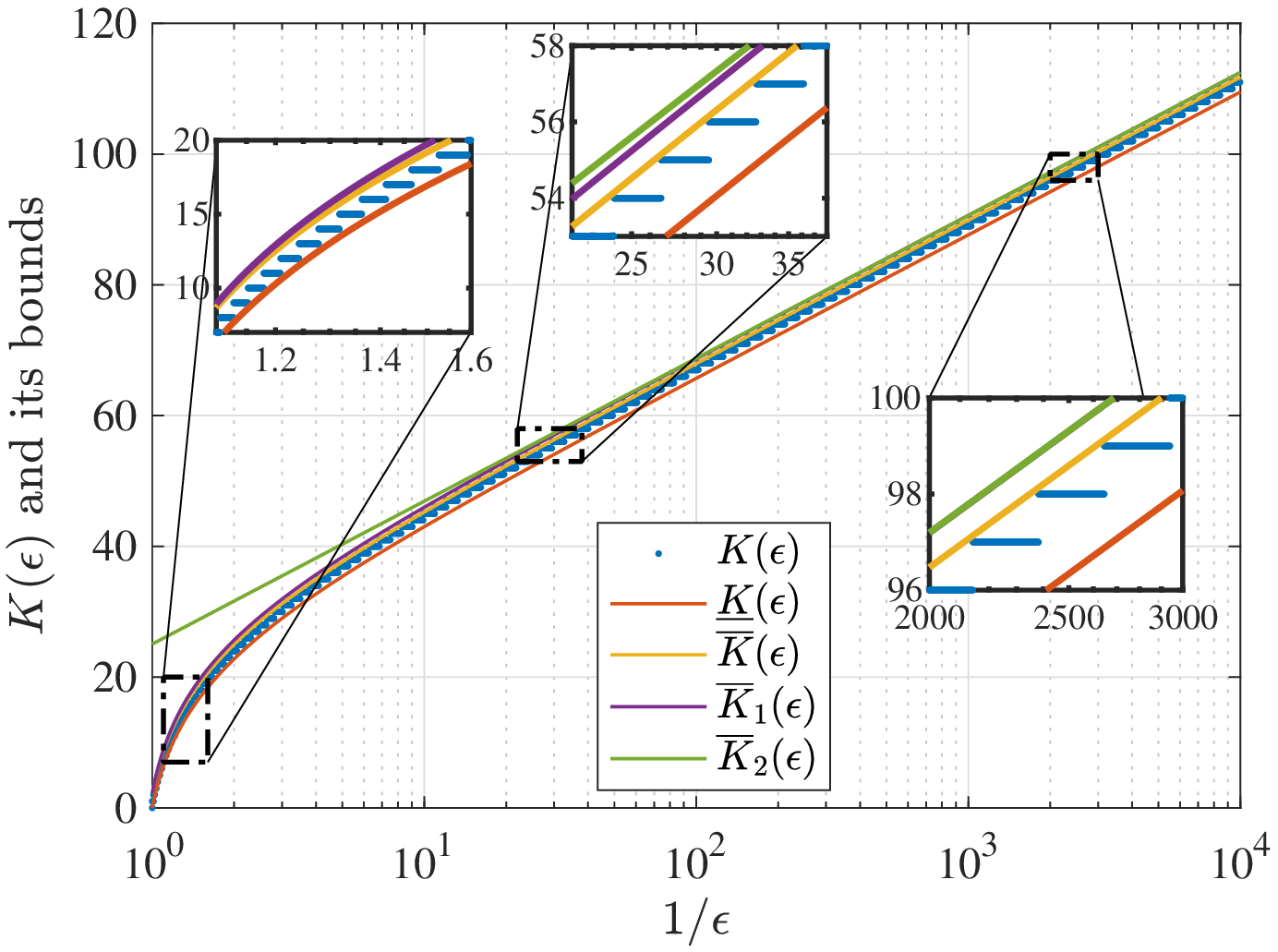}
        % \caption{$K(\epsilon)$ and its bounds for $\rho=0.9$ and $\tau=0.89$}
    \end{subfigure}
    \caption{(Left) Contour plot of the bound on asymptotic gap between $\overline{K}_2(\epsilon)$ and $K(\epsilon)$, given in (\ref{equ:gap}). (Right) Log-scale plot of $K(\epsilon)$ and its bounds as functions of $1/\epsilon$, with $\rho=0.9$ and $\tau=0.89$. Three zoomed plots are added to the original plot for better visualization.} \label{fig:k}
\end{figure}

In this section, we provide a tight upper bound on $K(\epsilon)$ in terms of $a_0$, $\rho$, $q$, and $\epsilon$.
Our bound suggests the sequence $\{a_k\}_{k=0}^\infty$ converges to $0$ at an \textbf{asymptotically linear rate} $\rho$ with an overhead cost that depends on only two quantities: $\rho$ and $a_0 q/(1-\rho)$.
Our main result is stated as follows.
\begin{theorem} \label{theo:scalar}
Consider the sequence $\{a_k\}_{k=0}^\infty$ defined in (\ref{equ:scalar}) with $a_0>0$, $0<\rho<1$, and $q > 0$. Assume that $a_0<(1-\rho)/q$ and denote $\tau = a_0 q / (1-\rho)$ (where $0<\tau<1$).
Then, for any $0<\epsilon<1$, the smallest integer, denoted by $K(\epsilon)$, such that $a_k \leq \epsilon a_0$ for all $k \ge K(\epsilon)$, can be bounded as follows
\begin{align} \label{equ:k1}
    K(\epsilon) \leq \frac{\log(1/\epsilon)}{\log(1/\rho)} + c(\rho,\tau) \triangleq \overline{K}_2(\epsilon) ,
\end{align}
where
\begin{align} \label{equ:c3}
    c(\rho,\tau) = \frac{1}{\rho \log(1/\rho)} \Delta E_1\Bigl(\log\frac{1}{\rho + \tau (1-\rho)} , \log\frac{1}{\rho}\Bigr) +  b(\rho,\tau),
\end{align}
$\Delta E_1(x,y) = E_1(x) - E_1(y)$, $E_1(x) = \int_x^\infty \frac{e^{-t}}{t}dt$ is the exponential integral \cite{milton1964handbook}, and
\begin{align} \label{equ:b}
    b(\rho,\tau) = \frac{1}{2\rho} \log \biggl( \frac{\log(1/\rho)}{\log \bigl( {1}/(\rho+\tau(1-\rho)) \bigr)} \biggr) + 1 . 
\end{align}
Moreover, the gap $\overline{K}_2(\epsilon)-K(\epsilon)$ is upper-bounded asymptotically as follows\footnote{A tighter version of the upper bound $\overline{K}_2(\epsilon)$ is given in Appendix~\ref{sec:proof}, cf., (\ref{equ:gap1}) and (\ref{equ:gap2}).}
\begin{align*}
    \lim_{\epsilon \to 0} \Bigl( \overline{K}_2(\epsilon)-K(\epsilon) \Bigr) &\leq \frac{\Delta E_1\bigl(2\log\frac{1}{\rho+\tau(1-\rho)} , 2\log\frac{1}{\rho}\bigr) - \rho \Delta E_1\bigl(\log\frac{1}{\rho+\tau(1-\rho)} , \log\frac{1}{\rho} \bigr)}{2\rho^2 \log(1/\rho)} \\
    &\qquad + b(\rho,\tau) . \numberthis \label{equ:gap}
\end{align*}
\end{theorem}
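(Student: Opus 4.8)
The plan is to follow the per-step contraction factor $s_k := \rho + q a_k$, so that $a_{k+1} = s_k a_k$ and $s_0 = \rho + q a_0 = \rho + \tau(1-\rho) =: r$, recast the counting problem through a ``progress'' variable, and compare the resulting discrete orbit with a continuous flow whose time-to-target integral reduces, after a change of variables, to the exponential integral. Since $a_0 < (1-\rho)/q$, the sequence $\{a_k\}$ decreases monotonically to $0$ (case~3 in the introduction), so $\{s_k\}$ decreases from $r$ down to $\rho$ with $s_{k+1}-\rho = s_k(s_k-\rho)$. Put $u_k := \log(a_0/a_k) \ge 0$; then $u_0 = 0$, $u_{k+1} = u_k + H(u_k)$ with $H(u) := -\log\bigl(\rho + q a_0\, e^{-u}\bigr)$, and $K(\epsilon)$ is the first index $k$ with $u_k \ge \log(1/\epsilon)$ (well defined by monotonicity, and $K(\epsilon) \ge 1$ since $\epsilon < 1$). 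On $[0,\infty)$, $H$ is positive, strictly increasing and strictly concave, with $H(0) = \log(1/r)$ and $H(u) \to \log(1/\rho)$ as $u \to \infty$; moreover $1/H$ is convex and $\log H$ is concave. I compare $\{u_k\}$ with the flow of $\dot u = H(u)$, $u(0) = 0$, and write $T(\epsilon) := \int_0^{\log(1/\epsilon)} du/H(u)$ for the time this flow needs to reach level $\log(1/\epsilon)$.

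The substitution $s = \rho + q a_0 e^{-u}$ turns $T(\epsilon)$ into $\int_{r_\epsilon}^{r} ds/\bigl((s-\rho)\log(1/s)\bigr)$ with $r_\epsilon := \rho + \epsilon(r-\rho)$. Splitting $\tfrac{1}{\log(1/s)} = \tfrac{1}{\log(1/\rho)} + \tfrac{\log(s/\rho)}{\log(1/s)\log(1/\rho)}$ and integrating the first summand exactly (it contributes precisely $\log(1/\epsilon)/\log(1/\rho)$), then enlarging the remaining integral from $[r_\epsilon, r]$ to $[\rho, r]$, bounding $\log(s/\rho) \le (s-\rho)/\rho$, and applying the substitution $t = \log(1/s)$ — which identifies $\int_\rho^r ds/\log(1/s)$ with $\Delta E_1\bigl(\log\tfrac1r, \log\tfrac1\rho\bigr)$ — gives
\[
 T(\epsilon) \ \le\ \frac{\log(1/\epsilon)}{\log(1/\rho)} + \frac{1}{\rho\log(1/\rho)}\, \Delta E_1\!\Bigl( \log\tfrac1r,\ \log\tfrac1\rho \Bigr).
\]
In the opposite direction, $H$ increasing gives $\int_{u_j}^{u_{j+1}} du/H(u) \le (u_{j+1}-u_j)/H(u_j) = 1$, so summing over $j < K(\epsilon)$ and using $u_{K(\epsilon)} \ge \log(1/\epsilon)$ yields $T(\epsilon) \le K(\epsilon)$.

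For the upper bound on $K(\epsilon)$, concavity of $H$ applied at the midpoint gives $H\bigl(u_j + \tfrac12 H(u_j)\bigr) \le H(u_j)\bigl(1 + \tfrac12 H'(u_j)\bigr)$, hence $\int_{u_j}^{u_{j+1}} du/H(u) \ge H(u_j)/H\bigl(u_j + \tfrac12 H(u_j)\bigr) \ge 1 - \tfrac12 H'(u_j)$. Summing over $0 \le j \le K(\epsilon)-2$ and using $u_{K(\epsilon)-1} < \log(1/\epsilon)$ yields $K(\epsilon) \le T(\epsilon) + 1 + \tfrac12 \sum_{j\ge 0} H'(u_j)$. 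The series is evaluated in closed form by telescoping $H'(u_j)$ against $\log H(u_{j+1}) - \log H(u_j)$: writing $H'(u_j) = q a_j/(\rho + q a_j)$ and using a per-step concavity inequality of the form $\rho\, H'(u) \le \log H(u + H(u)) - \log H(u)$ gives $\tfrac12 \sum_{j\ge0} H'(u_j) \le \tfrac{1}{2\rho}\log\tfrac{\log(1/\rho)}{\log(1/r)}$, and therefore $K(\epsilon) \le T(\epsilon) + b(\rho,\tau)$. Combined with the bound on $T(\epsilon)$ from the previous paragraph, this is exactly $K(\epsilon) \le \overline{K}_2(\epsilon)$, cf.\ (\ref{equ:c3}).

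For the asymptotic gap, the bound $T(\epsilon) \le K(\epsilon)$ gives $\overline{K}_2(\epsilon) - K(\epsilon) \le \overline{K}_2(\epsilon) - T(\epsilon)$; letting $\epsilon \to 0$ (so $r_\epsilon \to \rho$), the overhead integral becomes $\tfrac{1}{\log(1/\rho)} \int_\rho^r \bigl( \tfrac1\rho - \tfrac{\log(s/\rho)}{s-\rho} \bigr) \tfrac{ds}{\log(1/s)} + b(\rho,\tau)$, and the sharper estimate $\tfrac1\rho - \tfrac{\log(s/\rho)}{s-\rho} \le \tfrac{s-\rho}{2\rho^2}$ (equivalent to $\log(1+x) \ge x - x^2/2$) together with $\int_\rho^r \tfrac{s\,ds}{\log(1/s)} = \Delta E_1(2\log\tfrac1r, 2\log\tfrac1\rho)$ yields (\ref{equ:gap}). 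The main obstacle is the third step: the discrete iteration overshoots the continuous flow, and pinning the accumulated overshoot down to the exact closed form $b(\rho,\tau)$ — especially the per-step concavity inequality that makes $\sum_j H'(u_j)$ telescope, and the careful handling of the last, fractional step so the integer $K(\epsilon)$ fits under the real-valued $\overline{K}_2(\epsilon)$ — is where the work lies; the remaining steps are changes of variables plus one- and two-term Taylor bounds.
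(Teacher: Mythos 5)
Your proposal retraces the paper's own proof in Appendix~\ref{sec:proof}, merely reparametrized: your $u_k$ is the paper's $d_k$, your $H$ is the reciprocal of the paper's $f$ in (\ref{def:f}), your $T(\epsilon)$ equals $F(\log(1/\epsilon))$, and your substitution $s=\rho+qa_0e^{-u}$ is the paper's $z=\log(1/s)$ up to sign. The calculus checks out throughout: the split of $1/\log(1/s)$, the exact evaluation of the linear part, the bounds $\alpha-\alpha^2/2\le\log(1+\alpha)\le\alpha$ producing the $E_1$ terms, the Riemann-sum comparison $T(\epsilon)\le K(\epsilon)$, the per-step estimate $\int_{u_j}^{u_{j+1}}du/H\ge 1-\tfrac12 H'(u_j)$ (you obtain it from the midpoint rule for convex $1/H$ plus concavity of $H$; the paper uses the tangent line of the convex $f=1/H$ in (\ref{equ:F_dK}), with the identical outcome), the handling of the last fractional step giving the $+1$ in (\ref{equ:b}), and the limit computation yielding (\ref{equ:gap}) via $\int_\rho^r s\,ds/\log(1/s)=\Delta E_1(2\log\tfrac1r,2\log\tfrac1\rho)$.

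The one genuine gap is the step you yourself flag as ``where the work lies'': the inequality $\rho\,H'(u_j)\le \log H(u_{j+1})-\log H(u_j)$, which you invoke to telescope $\sum_j H'(u_j)$ into $\tfrac1\rho\log\bigl(\log(1/\rho)/\log(1/r)\bigr)$ with $r=\rho+\tau(1-\rho)$, and hence to obtain the closed form $b(\rho,\tau)$. This is not an off-the-shelf concavity inequality. Concavity of $\log H$ only gives $\log H(u_{j+1})-\log H(u_j)\ge H(u_j)\,H'(u_{j+1})/H(u_{j+1})$, so what is actually required is the ratio bound $g(u_{j+1})\ge\rho\,g(u_j)$ for $g=H'/H$, i.e.\ the function in (\ref{def:g}). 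Establishing this is the technical heart of the paper's argument (Lemma~\ref{lem:fgdn}): setting $t=\rho+\tau(1-\rho)e^{-u_j}$, the recursion sends $t\mapsto\rho+(t-\rho)t$ and the ratio becomes $t^2\log t/\bigl((\rho+(t-\rho)t)\log(\rho+(t-\rho)t)\bigr)$; one then needs $\log\bigl(\rho+(t-\rho)t\bigr)\ge(2-\rho/t)\log t$ from concavity of $\log$, followed by $2-\rho/t\le t/\rho$ and the negativity of $\log t$, to conclude. Until you supply this (or an equivalent) argument, the constant $b(\rho,\tau)$ in (\ref{equ:k1}) and (\ref{equ:gap}) is unjustified; everything else in your proposal is sound.
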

\noindent The proof of Theorem~\ref{theo:scalar} is given in Appendix~\ref{sec:proof}.
The upper bound $\overline{K}_2(\epsilon)$, given in (\ref{equ:k1}), is the sum of two terms: (i) the first term is similar to (\ref{equ:k_uniform}), representing the asymptotic linear convergence of $\{a_k\}_{k=0}^\infty$; (ii) the second term, $c(\rho,\tau)$, is independent of $\epsilon$, representing the overhead in the number of iterations caused by the non-linear term $q a_k^2$.
This overhead term is understood as the additional number of iterations beyond the number of iterations for the linear model.
As one would expect, when $a_0 \to (1-\rho)/q$, we have $\tau \to 1$ and $c(\rho,\tau)$ approaches infinity. On the other hand, when $\tau \to 0$, the gap from the number of iterations required by the linear model $c(\rho,\tau)$ approaches $1$. The right hand side (RHS) of (\ref{equ:gap}) is an upper bound on the asymptotic gap between our proposed upper bound on $K(\epsilon)$ and the actual value of $K(\epsilon)$ and hence represents the tightness of our bound. 
The value of the bound as a function of $\rho$ and $\tau$ is shown in Fig.~\ref{fig:k} (left). It is notable that the asymptotic gap is guaranteed to be no more than 10 iterations for a large portion of the $(\rho,\tau)$-space. It is particularly small in the lower right part of the figure. For example, for $\rho \geq 0.9$ and $\tau \leq 0.9$, the gap is no more than 4 iterations.
Figure~\ref{fig:k} (right) demonstrates different bounds on $K(\epsilon)$ (blue dotted line) including $\overline{K}_2(\epsilon)$ (green solid line). We refer the readers to Appendix~\ref{sec:proof} for the details of other bounds in the figure.
 We observe that the upper bound $\overline{K}_2(\epsilon)$ approaches $K(\epsilon)$ as $\epsilon \to 0$, with the asymptotic gap of less than $2$ iterations. On the other hand, $\overline{K}_2(\epsilon)$ reaches $c(\rho,\tau) \approx 25$ as $\epsilon \to 1$, suggesting that the proposed bound $\overline{K}_2(\epsilon)$ requires no more than $25$ iterations beyond the number of iterations required by the linear model to achieve $a_k \leq \epsilon a_0$.

\section{Extension to the Vector Case}

We now consider an extension of Theorem~\ref{theo:scalar} to the convergence analysis in the vector case given by (\ref{equ:nonlinear}). More elaborate applications of the proposed analysis in convergence analysis of iterative optimization methods can be found in \cite{vu2019local,vu2019accelerating,vu2019convergence,vu2021exact}.

\begin{theorem} \label{theo:vector}
Consider the difference equation
\begin{align}
    \bm \delta^{(k+1)} = \bm T \bm \delta^{(k)} + \bm q(\bm \delta^{(k)}) ,
\end{align}
where $\bm T \in \R^{n \times n}$ admits an eigendecomposition $\bm T = \bm Q \bm \Lambda \bm Q^{-1}$, $\bm Q \in \R^{n \times n}$ is an invertible matrix with the condition number $\kappa(\bm Q) = \norm{\bm Q}_2 \norm{\bm Q^{-1}}_2$, and $\bm \Lambda$ is an $n \times n$ diagonal matrix whose entries are strictly less than $1$ in magnitude. In addition, assume that there exists a finite constant $q>0$ satisfying $\norm{\bm q(\bm \delta)} \leq q \norm{\bm \delta}^2$ for any $\bm \delta \in \R^n$. Then, for any $0<\epsilon<1$, we have $\norm{{\bm \delta}^{(k)}} \leq \epsilon \norm{{\bm \delta}^{(0)}}$ provided that
\begin{align} \label{equ:dTk}
    \norm{{\bm \delta}^{(0)}} < \frac{1-\rho(\bm T)}{q \kappa(\bm Q)^2} \text{ and } k \geq \frac{\log(1/\epsilon)+\log(\kappa(\bm Q))}{\log(1/\rho(\bm T))} + c\bigl( \rho(\bm T),\frac{q \kappa(\bm Q) \norm{\bm Q}_2 \norm{\bm Q^{-1} \bm \delta^{(0)}}}{1-\rho(\bm T)} \bigr)
\end{align}
where $c(\rho,\tau)$ is given in (\ref{equ:c3}).
Moreover, if $\bm T$ is symmetric, then (\ref{equ:dTk}) becomes
\begin{align} \label{equ:dTk_sym}
    \norm{{\bm \delta}^{(0)}} < \frac{1-\rho(\bm T)}{q} \text{ and } k \geq \frac{\log(1/\epsilon)}{\log(1/\rho(\bm T))} + c\Bigl( \rho(\bm T),\frac{q \norm{\bm \delta^{(0)}}}{1-\rho(\bm T)} \Bigr) .
\end{align}
\end{theorem}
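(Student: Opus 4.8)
The plan is to reduce the vector recursion to the scalar recursion (\ref{equ:scalar}) and then invoke Theorem~\ref{theo:scalar}. First I would diagonalize the linear part through the change of variables $\bm y^{(k)} = \bm Q^{-1}\bm\delta^{(k)}$, which turns the recursion into $\bm y^{(k+1)} = \bm\Lambda\bm y^{(k)} + \bm Q^{-1}\bm q(\bm Q\bm y^{(k)})$. Taking Euclidean norms and using $\norm{\bm\Lambda}_2 = \rho(\bm T)$ (the diagonal of $\bm\Lambda$ consists of the eigenvalues of $\bm T$), submultiplicativity of $\norm{\cdot}_2$, and the hypothesis $\norm{\bm q(\bm\delta)} \le q\norm{\bm\delta}^2$, one obtains for $a_k \triangleq \norm{\bm y^{(k)}} = \norm{\bm Q^{-1}\bm\delta^{(k)}}$ the scalar inequality
\begin{align*}
    a_{k+1} \;\le\; \rho(\bm T)\, a_k + \tilde{q}\, a_k^2, \qquad \tilde{q} \triangleq q\,\norm{\bm Q}_2^2\,\norm{\bm Q^{-1}}_2 .
\end{align*}

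Next I would upgrade this inequality to an exact recursion by a monotone comparison argument: let $\{b_k\}$ be generated by $b_{k+1} = \rho(\bm T)\, b_k + \tilde{q}\, b_k^2$ with $b_0 = a_0 = \norm{\bm Q^{-1}\bm\delta^{(0)}}$. Since $x \mapsto \rho(\bm T)\, x + \tilde{q}\, x^2$ is nondecreasing on $[0,\infty)$ and every $a_k$ is nonnegative, induction yields $a_k \le b_k$ for all $k$. To apply Theorem~\ref{theo:scalar} to $\{b_k\}$ I must check $b_0 < (1-\rho(\bm T))/\tilde{q}$; this follows from the assumed bound $\norm{\bm\delta^{(0)}} < (1-\rho(\bm T))/(q\,\kappa(\bm Q)^2)$ together with $\norm{\bm Q^{-1}\bm\delta^{(0)}} \le \norm{\bm Q^{-1}}_2\norm{\bm\delta^{(0)}}$ and $\kappa(\bm Q)^2 = \norm{\bm Q}_2^2\norm{\bm Q^{-1}}_2^2$. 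I would also record that the parameter $\tau = b_0\,\tilde{q}/(1-\rho(\bm T))$ of Theorem~\ref{theo:scalar} equals $q\,\kappa(\bm Q)\,\norm{\bm Q}_2\,\norm{\bm Q^{-1}\bm\delta^{(0)}}/(1-\rho(\bm T))$, i.e.\ the second argument of $c(\cdot,\cdot)$ in (\ref{equ:dTk}).

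Applying Theorem~\ref{theo:scalar} with target ratio $\epsilon' = \epsilon/\kappa(\bm Q) \in (0,1)$ (using $\kappa(\bm Q) \ge 1$) then gives $b_k \le \epsilon'\, b_0$, hence $a_k \le \epsilon'\, a_0$, for every integer $k \ge \tfrac{\log(1/\epsilon')}{\log(1/\rho(\bm T))} + c(\rho(\bm T),\tau)$. Undoing the change of variables, $\norm{\bm\delta^{(k)}} = \norm{\bm Q\bm y^{(k)}} \le \norm{\bm Q}_2\, a_k \le \norm{\bm Q}_2\,\epsilon'\,\norm{\bm Q^{-1}\bm\delta^{(0)}} \le \norm{\bm Q}_2\norm{\bm Q^{-1}}_2\,\epsilon'\,\norm{\bm\delta^{(0)}} = \epsilon\,\norm{\bm\delta^{(0)}}$, while $\tfrac{\log(1/\epsilon')}{\log(1/\rho(\bm T))} = \tfrac{\log(1/\epsilon)+\log\kappa(\bm Q)}{\log(1/\rho(\bm T))}$; this is exactly (\ref{equ:dTk}). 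For symmetric $\bm T$ I would take $\bm Q$ orthogonal via the spectral theorem, so that $\norm{\bm Q}_2 = \norm{\bm Q^{-1}}_2 = \kappa(\bm Q) = 1$ and $\norm{\bm Q^{-1}\bm\delta^{(0)}} = \norm{\bm\delta^{(0)}}$; substituting these into (\ref{equ:dTk}) collapses it to (\ref{equ:dTk_sym}) (equivalently, one may argue directly from $\norm{\bm T}_2 = \rho(\bm T)$ applied to the original iterate).

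The change of variables and the norm estimates are routine; the steps that require care are the monotone comparison --- deriving $a_k \le b_k$ from an \emph{inequality} recursion, which relies on the monotonicity of the scalar update together with nonnegativity of the iterates --- and the bookkeeping of the condition-number factors, so that the convergence region $\norm{\bm\delta^{(0)}} < (1-\rho(\bm T))/(q\,\kappa(\bm Q)^2)$ and the second argument of $c(\cdot,\cdot)$ emerge exactly as stated; in particular one should keep $\norm{\bm Q^{-1}\bm\delta^{(0)}}$ rather than prematurely replacing it by the coarser $\norm{\bm Q^{-1}}_2\norm{\bm\delta^{(0)}}$.
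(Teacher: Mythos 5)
Your proposal is correct and follows essentially the same route as the paper's proof: diagonalize via $\bm Q^{-1}$, bound the transformed iterates by the scalar quadratic recursion through an inductive comparison, apply Theorem~\ref{theo:scalar} with the rescaled tolerance $\epsilon/\kappa(\bm Q)$, and undo the change of variables (with the orthogonal specialization for symmetric $\bm T$). The bookkeeping of the condition-number factors, including keeping $\norm{\bm Q^{-1}\bm\delta^{(0)}}$ in the second argument of $c(\cdot,\cdot)$, matches the paper exactly.
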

\noindent Note that the RHS of the inequalities involving $k$ in both (\ref{equ:dTk}) and (\ref{equ:dTk_sym}) serve as upper bounds to $H(\epsilon)$ defined in the introduction. Moreover, the sets of all ${\bm \delta}^{(0)}$ that satisfy the inequality involving ${\bm \delta}^{(0)}$ in both (\ref{equ:dTk}) and (\ref{equ:dTk_sym}) offer valid regions of convergence. Similar to the scalar case, we observe in the number of required iterations one term corresponding to the asymptotic linear convergence and another term corresponding to the non-linear convergence at the early stage. When $\bm T$ is asymmetric, there is an additional cost of diagonalizing $\bm T$, associated with $\kappa(\bm Q)$ in (\ref{equ:dTk}).
The proof of Theorem~\ref{theo:vector} is given in Appendix~\ref{sec:proof2}.

\section{Conclusions}
With a focus on fixed-point iterations, we analyzed the convergence of the sequence generated by a quadratic first-order difference equation. We presented a bound on the minimum number of iterations required for the distance between the iterate and the limit point to reach an arbitrarily small fraction of the initial distance. Our bound includes two terms: one corresponds to the number of iterations required for the linearized difference equation and the other corresponds to the overhead associated with the residual term. The bound for the vector case is derived based on a tight bound obtained for the scalar quadratic difference equation. A characterization of the tightness of the bound for the scalar quadratic difference equation was introduced.

% \section*{Acknowledgments}
% This work is partially supported by the National Science Foundation grant CCF-1254218.

% \backmatter

\section*{Statements and Declarations}
The authors have no competing interests to declare that are relevant to the content of this article. Data sharing not applicable to this article as no datasets were generated or analysed during the current study.

% \noindent
% If any of the sections are not relevant to your manuscript, please include the heading and write `Not applicable' for that section. 

%%===================================================%%
%% For presentation purpose, we have included        %%
%% \bigskip command. please ignore this.             %%
%%===================================================%%
% \bigskip
% \begin{flushleft}%
% Editorial Policies for:

% \bigskip\noindent
% Springer journals and proceedings: \url{https://www.springer.com/gp/editorial-policies}

% \bigskip\noindent
% Nature Portfolio journals: \url{https://www.nature.com/nature-research/editorial-policies}

% \bigskip\noindent
% \textit{Scientific Reports}: \url{https://www.nature.com/srep/journal-policies/editorial-policies}

% \bigskip\noindent
% BMC journals: \url{https://www.biomedcentral.com/getpublished/editorial-policies}
% \end{flushleft}

\begin{appendices}

% \section{Section title of first appendix}\label{secA1}

% An appendix contains supplementary information that is not an essential part of the text itself but which may be helpful in providing a more comprehensive understanding of the research problem or it is information that is too cumbersome to be included in the body of the paper.

% %%=============================================%%
% %% For submissions to Nature Portfolio Journals %%
% %% please use the heading ``Extended Data''.   %%
% %%=============================================%%

% %%=============================================================%%
% %% Sample for another appendix section			       %%
% %%=============================================================%%

% %% \section{Example of another appendix section}\label{secA2}%
% %% Appendices may be used for helpful, supporting or essential material that would otherwise 
% %% clutter, break up or be distracting to the text. Appendices can consist of sections, figures, 
% %% tables and equations etc.

\section{Proof of Theorem~\ref{theo:scalar}}
\label{sec:proof}

First, we establish a sandwich inequality on $K(\epsilon)$ in the following lemma:
\begin{lemma} \label{lem:F}
For any $0 < \epsilon < 1$, let $K(\epsilon)$  be the smallest integer such that for all $k \ge K(\epsilon)$, we have $a_k \leq \epsilon a_0$. Then, 
\begin{align} \label{equ:K_epsilon}
    \underline{K}(\epsilon) \triangleq F\bigl(\log(1/\epsilon)\bigr) \leq K(\epsilon) \leq F\bigl(\log(1/\epsilon)\bigr) + b(\rho,\tau) \triangleq \overline{K}(\epsilon) , 
\end{align}
where $b(\rho,\tau)$ is defined in (\ref{equ:b}) and
\begin{align} \label{def:f}
   F(x) = \int_0^x f(t) dt  \quad \text{ with } \quad  f(x) &= \frac{1}{-\log \bigl(\rho + \tau (1-\rho) e^{-x} \bigr)} .
\end{align}
\end{lemma}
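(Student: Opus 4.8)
The plan is to linearize the recursion through the logarithmic error variable. Set $x_k=\log(a_0/a_k)$; then $a_k\le\epsilon a_0$ is equivalent to $x_k\ge\log(1/\epsilon)$, and dividing (\ref{equ:scalar}) by $a_0$ and taking logarithms turns it into the additive recursion $x_{k+1}=x_k+g(x_k)$ with $g(x)=-\log(\rho+\tau(1-\rho)e^{-x})=1/f(x)$, $x_0=0$, and $x_k$ strictly increasing to $+\infty$ (since $a_k\downarrow 0$ by hypothesis). Consequently $K(\epsilon)$ is exactly the first index $k$ with $x_k\ge\log(1/\epsilon)$. I will first record elementary facts used throughout: $f$ is positive, strictly decreasing and strictly convex on $[0,\infty)$ (equivalently $g$ is positive, strictly increasing and strictly concave, since $g''(x)=-\rho\tau(1-\rho)e^{-x}/(\rho+\tau(1-\rho)e^{-x})^2<0$); $fg\equiv 1$; $g(0)=\log\frac{1}{\rho+\tau(1-\rho)}$ and $g(x)\uparrow\log(1/\rho)$; $g'(x)=\tau(1-\rho)e^{-x}/(\rho+\tau(1-\rho)e^{-x})$ is decreasing with $g'(0)=\tfrac{\tau(1-\rho)}{\rho+\tau(1-\rho)}\le 2(1-\rho)$ (the last inequality is $\tau(2\rho-1)\le 2\rho$, immediate); and $F$ is strictly increasing.

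For the lower bound $\underline K(\epsilon)\le K(\epsilon)$, write $N=K(\epsilon)$, so $x_N\ge\log(1/\epsilon)$. Monotonicity of $f$ gives $F(x_{k+1})-F(x_k)=\int_{x_k}^{x_{k+1}}f(t)\,dt\le f(x_k)(x_{k+1}-x_k)=f(x_k)g(x_k)=1$; summing over $k=0,\dots,N-1$ and using $F(x_0)=F(0)=0$ yields $F(x_N)\le N$, and then $F(\log(1/\epsilon))\le F(x_N)\le N$ because $F$ is increasing.

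For the upper bound $K(\epsilon)\le\overline K(\epsilon)$, convexity of $f$ gives the matching one-step lower bound $F(x_{k+1})-F(x_k)=\int_{x_k}^{x_{k+1}}f(t)\,dt\ge f(x_k)g(x_k)+\tfrac12 f'(x_k)g(x_k)^2=1-\tfrac12 g'(x_k)$, using $f'g^2=-g'$. I then prove the key per-step estimate
\begin{align*}
    \tfrac12 g'(x_k)\ \le\ \tfrac1{2\rho}\bigl(\log g(x_{k+1})-\log g(x_k)\bigr).
\end{align*}
Granting it and summing over $k=0,\dots,N-2$ gives $(N-1)-F(x_{N-1})=\sum_{k=0}^{N-2}\bigl(1-(F(x_{k+1})-F(x_k))\bigr)\le\tfrac1{2\rho}\bigl(\log g(x_{N-1})-\log g(0)\bigr)<\tfrac1{2\rho}\log\frac{\log(1/\rho)}{\log(1/(\rho+\tau(1-\rho)))}$, since $g(x_{N-1})<\log(1/\rho)$. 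As $x_{N-1}<\log(1/\epsilon)$ implies $F(x_{N-1})<F(\log(1/\epsilon))$, we get $N<F(\log(1/\epsilon))+1+\tfrac1{2\rho}\log\frac{\log(1/\rho)}{\log(1/(\rho+\tau(1-\rho)))}=F(\log(1/\epsilon))+b(\rho,\tau)=\overline K(\epsilon)$, hence $K(\epsilon)\le\overline K(\epsilon)$.

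The main obstacle is the key per-step estimate, i.e., $\rho\,g'(x_k)\le\log g(x_{k+1})-\log g(x_k)$. My plan is: (i) by strict concavity of $g$, $\log g(x_{k+1})-\log g(x_k)=\log\frac{g(x_{k+1})}{g(x_k)}\ge\log\!\bigl(1+g'(x_{k+1})\bigr)$ because the tangent of $g$ at $x_{k+1}$ lies above $g$ and $x_{k+1}-x_k=g(x_k)$; (ii) $\log(1+t)\ge t/(1+t)$, so it suffices to show $\frac{g'(x_{k+1})}{1+g'(x_{k+1})}\ge\rho\,g'(x_k)$; (iii) writing $d_k=g'(x_k)$ and using the substitution $m_k=\tau(1-\rho)e^{-x_k}$ (for which $m_{k+1}=m_k(\rho+m_k)$ and $d_k=m_k/(\rho+m_k)$), one obtains the closed-form recursion $d_{k+1}=\rho d_k/\bigl(1-(2-\rho)d_k+d_k^2\bigr)$, whose denominator is always positive (its discriminant $\rho(\rho-4)$ is negative), so $\frac{d_{k+1}}{1+d_{k+1}}=\frac{\rho d_k}{1-2(1-\rho)d_k+d_k^2}\ge\rho d_k$ exactly when $d_k\le 2(1-\rho)$; (iv) this holds since $d_k\le d_0=g'(0)\le 2(1-\rho)$. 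I expect the delicate bookkeeping to be in steps (iii)--(iv) — deriving the rational recursion for $d_k$ cleanly and tracking signs — so that the constant in the final bound comes out exactly as $\tfrac1{2\rho}$ and the overhead as $b(\rho,\tau)$ rather than a looser expression; the tighter variants alluded to in the footnote should follow by replacing the crude estimates $g'(x_k)\le g'(0)$ and $g(x_{N-1})<\log(1/\rho)$ by exact quantities.
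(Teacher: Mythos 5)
Your proof is correct, and its overall architecture coincides with the paper's: the logarithmic substitution $x_k=\log(a_0/a_k)$ (the paper's $d_k$), the Riemann-sum upper bound $F(x_N)\le N$ from monotonicity of $f$, the tangent-line lower bound $F(x_{k+1})-F(x_k)\ge 1-\tfrac12 g'(x_k)$ from convexity of $f$, and the final conversion between $x_{K(\epsilon)-1}<\log(1/\epsilon)\le x_{K(\epsilon)}$ and the stated bounds are all the same. Where you genuinely diverge is in the crux: bounding $\sum_k g'(x_k)$ by $\tfrac1\rho\log\bigl(\log(1/\rho)/\log(1/(\rho+\tau(1-\rho)))\bigr)$. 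The paper compares that sum to the integral of the density $g'(t)f(t)=\tfrac{d}{dt}\log g(t)$, whose antiderivative is exactly $\log\bigl(g(x)/g(0)\bigr)$, and pays for the discretization with a separate ratio lemma (its Lemma~3: the density at consecutive iterates decays by a factor at least $\rho$, proved via concavity of $\log$ and an AM--GM-type step). You instead telescope $\log g(x_{k+1})-\log g(x_k)$ directly and reduce everything to the per-step inequality $\rho\, g'(x_k)\le \log g(x_{k+1})-\log g(x_k)$, established from the tangent bound for the concave $g$, the elementary $\log(1+t)\ge t/(1+t)$, and the exact rational recursion $d_{k+1}=\rho d_k/\bigl(1-(2-\rho)d_k+d_k^2\bigr)$ for $d_k=g'(x_k)$. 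I checked the recursion (via $m_k=\rho d_k/(1-d_k)$ and $\rho+m_k=\rho/(1-d_k)$), the equivalence $d_{k+1}/(1+d_{k+1})\ge\rho d_k\iff d_k\le 2(1-\rho)$, and the initialization $d_0=\tau(1-\rho)/(\rho+\tau(1-\rho))\le 2(1-\rho)$; all are right, and the telescoped total $\tfrac1{2\rho}\bigl(\log g(x_{N-1})-\log g(0)\bigr)$ yields exactly the paper's $b(\rho,\tau)-1$. Your route is self-contained and purely discrete (no auxiliary integral $G$, no separate ratio lemma) at the cost of the somewhat more delicate closed-form recursion for $g'(x_k)$; the paper's integral-comparison route is more modular and would adapt more easily when the correction density does not admit such a clean algebraic form.
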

\noindent The lemma provides an upper bound on $K(\epsilon)$. Moreover, it is a tight bound in the sense that the gap between lower bound $\underline{K}(\epsilon)$ and the upper bound $\overline{K}(\epsilon)$ is independent of $\epsilon$. 
In other words, the ratio $K(\epsilon)/\overline{K}(\epsilon)$ approaches $1$ as $\epsilon \to 0$.
Next, we proceed to obtain a tight closed-form upper bound on $\overline{K}(\epsilon)$ by upper-bounding $F(\log(1/\epsilon))$.
\begin{lemma} \label{lem:F_ep}
Consider the function $F(\cdot)$ given in (\ref{def:f}). For $0<\epsilon<1$, we have
\begin{align*}
    F\bigl(\log(1/\epsilon)\bigr) &\leq \frac{\log(1/\epsilon)}{\log(1/\rho)} + \frac{\Delta E_1 \Bigl(\log\frac{1}{\rho+\tau(1-\rho)} , \log\frac{1}{\rho+\epsilon\tau(1-\rho)}\Bigr)}{\rho \log(1/\rho)} \triangleq \overline{F}_1\bigl(\log(1/\epsilon)\bigr) \numberthis \label{equ:F_ep1} \\
    &\leq \frac{\log(1/\epsilon)}{\log(1/\rho)} + \frac{\Delta E_1 \Bigl(\log\frac{1}{\rho+\tau(1-\rho)} , \log\frac{1}{\rho}\Bigr)}{\rho \log(1/\rho)} \triangleq \overline{F}_2 \bigl(\log(1/\epsilon)\bigr) \numberthis \label{equ:F_ep}
\end{align*}
and 
\begin{align*} 
    F\bigl(\log(1/\epsilon)\bigr) &\geq \overline{F}_1\bigl(\log(1/\epsilon)\bigr) - A(\epsilon) \triangleq \underline{F}_1 \bigl(\log(1/\epsilon)\bigr) , \numberthis \label{equ:F_ep1lo}
\end{align*}
where
\begin{align*} 
    A(\epsilon) &\triangleq \frac{\Delta E_1\bigl(2\log\frac{1}{\rho+\tau(1-\rho)} , 2\log\frac{1}{\rho+\tau(1-\rho)\epsilon}\bigr) - \rho \Delta E_1\bigl(\log\frac{1}{\rho+\tau(1-\rho)} , \log\frac{1}{\rho+\tau(1-\rho)\epsilon}\bigr)}{2\rho^2 \log(1/\rho)} . \numberthis \label{equ:A}
\end{align*}
\end{lemma}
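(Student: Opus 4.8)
The plan is to carry out every computation in the variable $u = \rho + \tau(1-\rho)e^{-t}$, which maps the $t$-integral defining $F$ onto an integral over the fixed interval $[\,\rho+\tau(1-\rho)\epsilon,\ \rho+\tau(1-\rho)\,]$. Since $du = -(u-\rho)\,dt$, the substitution in (\ref{def:f}) gives
\begin{align*}
    F\bigl(\log(1/\epsilon)\bigr) = \int_{\rho+\tau(1-\rho)\epsilon}^{\rho+\tau(1-\rho)} \frac{du}{(u-\rho)\bigl(-\log u\bigr)} ,
\end{align*}
and, since $\int\frac{du}{u-\rho} = \log(u-\rho)$, the linear term may be written over the same interval as $\frac{\log(1/\epsilon)}{\log(1/\rho)} = \frac{1}{\log(1/\rho)}\int_{\rho+\tau(1-\rho)\epsilon}^{\rho+\tau(1-\rho)}\frac{du}{u-\rho}$. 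Subtracting, the lemma reduces to estimating the single integral
\begin{align*}
    F\bigl(\log(1/\epsilon)\bigr) - \frac{\log(1/\epsilon)}{\log(1/\rho)} = \frac{1}{\log(1/\rho)}\int_{\rho+\tau(1-\rho)\epsilon}^{\rho+\tau(1-\rho)} \frac{\log(u/\rho)}{(u-\rho)\bigl(-\log u\bigr)}\,du .
\end{align*}
No singularity arises on this range, because $\tau<1$ forces $\rho+\tau(1-\rho)<1$, so $-\log u$ stays strictly positive and bounded on the interval.

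For the upper bounds I would use the elementary estimate $\log(1+r)\le r$ with $r=(u-\rho)/\rho$, which gives $\log(u/\rho)/(u-\rho)\le 1/\rho$, hence
\begin{align*}
    F\bigl(\log(1/\epsilon)\bigr) - \frac{\log(1/\epsilon)}{\log(1/\rho)} \le \frac{1}{\rho\log(1/\rho)}\int_{\rho+\tau(1-\rho)\epsilon}^{\rho+\tau(1-\rho)} \frac{du}{-\log u} .
\end{align*}
The identity $\int_0^c \frac{dw}{-\log w} = E_1\bigl(\log(1/c)\bigr)$ for $0<c<1$ — obtained from the definition of $E_1$ via $s=-\log w$ — together with its consequence $\int_c^d\frac{dw}{-\log w}=\Delta E_1\bigl(\log\tfrac1d,\log\tfrac1c\bigr)$ identifies the last integral as $\Delta E_1\bigl(\log\tfrac{1}{\rho+\tau(1-\rho)},\log\tfrac{1}{\rho+\epsilon\tau(1-\rho)}\bigr)$, which is exactly (\ref{equ:F_ep1}). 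Inequality (\ref{equ:F_ep}) then follows by enlarging the domain of the positive integrand $\frac{1}{-\log u}$, replacing the lower limit $\rho+\tau(1-\rho)\epsilon$ by $\rho$ (still below $1$, so no singularity is introduced).

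For the lower bound (\ref{equ:F_ep1lo}) I would instead use $\log(1+r)\ge r-\tfrac{r^2}{2}$ for $r\ge0$, so that $\log(u/\rho)/(u-\rho)\ge \tfrac1\rho - \tfrac{u-\rho}{2\rho^2}$. The extra term contributes
\begin{align*}
    -\frac{1}{2\rho^2\log(1/\rho)}\int_{\rho+\tau(1-\rho)\epsilon}^{\rho+\tau(1-\rho)} \frac{u-\rho}{-\log u}\,du = -\frac{1}{2\rho^2\log(1/\rho)}\left(\int \frac{u\,du}{-\log u} - \rho\int \frac{du}{-\log u}\right) ,
\end{align*}
where the second integral is again a $\Delta E_1$ term and the first, via the substitution $w=u^2$ (so that $\frac{u\,du}{-\log u} = \frac{dw}{-\log w}$), becomes a $\Delta E_1$ at the doubled arguments $2\log\tfrac{1}{\rho+\tau(1-\rho)}$ and $2\log\tfrac{1}{\rho+\tau(1-\rho)\epsilon}$. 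Collecting the two pieces reproduces $A(\epsilon)$ as defined in (\ref{equ:A}), yielding $F\bigl(\log(1/\epsilon)\bigr)\ge\overline{F}_1\bigl(\log(1/\epsilon)\bigr)-A(\epsilon)$.

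The computation is otherwise routine; the only points that need care are keeping the orientation of the limits straight (the map $t\mapsto u$ is decreasing) and matching the $\Delta E_1$ arguments to the claimed expressions. The mildly non-obvious step is recognizing that $\int\frac{u\,du}{-\log u}$ collapses to another exponential-integral difference under $w=u^2$; all remaining inputs are just the one-line inequalities $r-\tfrac{r^2}{2}\le\log(1+r)\le r$ and the single $E_1$ identity above.
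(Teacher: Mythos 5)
Your proposal is correct and follows essentially the same route as the paper: after your substitution $u=\rho+\tau(1-\rho)e^{-t}$ your correction integrand $\frac{\log(u/\rho)}{(u-\rho)(-\log u)}$ is exactly the paper's term $\frac{\log(1+\nu e^{-t})}{\log(1/\rho)-\log(1+\nu e^{-t})}$ rewritten (the paper's variable $z$ is just $-\log u$), and both arguments then rest on the same two-sided bound $r-\tfrac{r^2}{2}\le\log(1+r)\le r$ and the same identification of $\int \frac{du}{-\log u}$ and $\int\frac{u\,du}{-\log u}$ as exponential-integral differences. All the limit bookkeeping and the domain-enlargement step for $\overline{F}_2$ check out.
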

\noindent Lemma~\ref{lem:F_ep} offers two upper bounds on $F(\log(1/\epsilon))$ and one lower bound. The first bound $\overline{F}_1(\log(1/\epsilon))$ approximates well the behavior of $F(\log(1/\epsilon))$ for both small and large values of $\log(1/\epsilon)$. The second bound $\overline{F}_2 (\log(1/\epsilon))$ provides a linear bound on $F(\log(1/\epsilon))$ in terms of $\log(1/\epsilon)$. Moreover, the gap between $F(\log(1/\epsilon))$ and $\underline{F}_1(\log(1/\epsilon))$, given by $A(\epsilon)$, can be upper bound by $A(0)$ since $A(\cdot)$ is monotonically decreasing for $\epsilon \in [0,1)$. While $F(\cdot)$ asymptotically increases like $\log(1/\epsilon)/ \log(1/\rho)$, the gap approaches a constant independent of $\epsilon$.
Replacing $F(\log(1/\epsilon))$ on the RHS of (\ref{equ:K_epsilon}) by either of the upper bounds in Lemma~\ref{lem:F_ep}, we obtain two corresponding bounds on $K(\epsilon)$:
\begin{align} \label{equ:K12u}
    \overline{K}_1(\epsilon) \triangleq \overline{F}_1\bigl(\log(1/\epsilon)\bigr) + b(\rho,\tau) \leq \overline{F}_2\bigl(\log(1/\epsilon)\bigr) + b(\rho,\tau) \triangleq \overline{K}_2(\epsilon) ,
\end{align}
where we note that $\overline{K}_2(\epsilon)$ has the same expression as in (\ref{equ:k1}). 
Moreover, the tightness of these two upper bounds can be shown as follows. 
First, using the first inequality in (\ref{equ:K_epsilon}) and then the lower bound on $F(\log(1/\epsilon))$ in (\ref{equ:F_ep1lo}), the gap between $\overline{K}_1(\epsilon)$ and $K(\epsilon)$ can be bounded by
\begin{align*}
    \overline{K}_1(\epsilon) - K(\epsilon) &\leq \overline{K}_1(\epsilon) - F\bigl(\log(1/\epsilon)\bigr) \\
    &\leq \overline{K}_1(\epsilon) - \Bigl( \overline{F}_1\bigl(\log(1/\epsilon)\bigr) - A(\epsilon) \Bigr) \\
    &= \Bigl( \overline{F}_1 \bigl( \bigl(\log(1/\epsilon)\bigr) + b(\rho,\tau) \Bigr) - \Bigl( \overline{F}_1\bigl(\log(1/\epsilon)\bigr) - A(\epsilon) \Bigr) \\
    &= A(\epsilon) + b(\rho,\tau) \\
    &\leq A(0) + b(\rho,\tau) , \numberthis \label{equ:gap1}
\end{align*}
where the last inequality stems from the monotonicity of $A(\cdot)$ in $[0,1)$.
Note that the bound in (\ref{equ:gap1}) holds uniformly independent of $\epsilon$, implying $\overline{K}_1(\epsilon)$ is a tight bound on $K(\epsilon)$. Second, using (\ref{equ:K12u}), the gap between $\overline{K}_2(\epsilon)$ and $K(\epsilon)$ can be represented as
\begin{align*}
    \overline{K}_2(\epsilon) - K(\epsilon) &= \bigl( \overline{K}_2(\epsilon) - \overline{K}_1(\epsilon) \bigr) + \bigl( \overline{K}_1(\epsilon) - K(\epsilon) \bigr) \\
    &= \bigl( \overline{F}_2\bigl(\log(1/\epsilon)\bigr) - \overline{F}_1\bigl(\log(1/\epsilon)\bigr) \bigr) + \bigl( \overline{K}_1(\epsilon) - K(\epsilon) \bigr) \\
    &\leq \bigl( \overline{F}_2\bigl(\log(1/\epsilon)\bigr) - \overline{F}_1\bigl(\log(1/\epsilon)\bigr) \bigr) + \bigl( A(0) + b(\rho,\tau) \bigr) , \numberthis \label{equ:gap20}
\end{align*}
where the last inequality stems from (\ref{equ:gap1}). Furthermore, using the definition of $\overline{F}_1(\log(1/\epsilon))$ and $\overline{F}_2(\log(1/\epsilon))$ in (\ref{equ:F_ep1}) and (\ref{equ:F_ep}), respectively, we have $\lim_{\epsilon \to 0} (\overline{F}_2(\log(1/\epsilon)) - \overline{F}_1(\log(1/\epsilon))) = 0$. Thus, taking the limit $\epsilon \to 0$ on both sides of (\ref{equ:gap20}), we obtain
\begin{align*}
    \lim_{\epsilon \to 0} \bigl( \overline{K}_2(\epsilon) - K(\epsilon) \bigr) &\leq A(0) + b(\rho,\tau) . \numberthis \label{equ:gap2}
\end{align*}
We note that $\overline{K}_2(\epsilon)$ is a simple bound that is linear in terms of $\log(1/\epsilon)$ and approaches the upper bound $\overline{K}_1(\epsilon)$ in the asymptotic regime ($\epsilon \to 0$).
Evaluating $A(0)$ from (\ref{equ:A}) and substituting it back into (\ref{equ:gap2}) yields (\ref{equ:gap}), which completes our proof of Theorem~\ref{theo:scalar}.
Figure~\ref{fig:k} (right) depicts the aforementioned bounds on $K(\epsilon)$.
It can be seen from the plot that all the four bounds match the asymptotic rate of increment in $K(\epsilon)$ (for large values of $1/\epsilon$).
The three bounds $\underline{K}(\epsilon)$ (red), $\overline{K}(\epsilon)$ (yellow), and $\overline{K}_1(\epsilon)$ (purple) closely follow $K(\epsilon)$ (blue), indicating that the integral function $F(\cdot)$ effectively estimates the minimum number of iterations required to achieve $a_k \leq \epsilon a_0$ in this setting.
The upper bound $\overline{K}_2(\epsilon)$ (green) forms a tangent to $\overline{K}_1(\epsilon)$ at $1/\epsilon \to \infty$ (i.e., $\epsilon \to 0$).

\subsection{Proof of Lemma~\ref{lem:F}}

\noindent Let $d_k=\log({a_0}/{a_k})$  for each $k \in {\mathbb N}$.  Substituting $a_k = a_0 e^{-d_k}$ into (\ref{equ:scalar}), we obtain the surrogate sequence $\{d_k\}_{k=0}^\infty$:
\begin{align} \label{equ:dn}
    d_{k+1} = d_k - \log \bigl(\rho + \tau (1-\rho) e^{-d_k} \bigr) , 
\end{align}
where $d_0=0$ and $\tau = a_0 q/(1-\rho) \in (0,1)$.
Since $\{a_k\}_{k=0}^\infty$ is monotonically decreasing to $0$ and $d_k$ is monotonically decreasing as a function of $a_k$, $\{d_k\}_{k=0}^\infty$ is a monotonically increasing sequence. 
% For all $x \in \R$, it holds that $f(x) > 0$, $f'(x) < 0$, and $f''(x) > 0$.
Our key steps in this proof are first to tightly bound the index $K \in \mathbb{N}$ using $F(d_K)$
\begin{align} \label{equ:d_K}
    F(d_K) \leq K \leq F(d_K) + \frac{1}{2\rho} \log \biggl( \frac{\log\rho}{\log\bigl(\rho+\tau(1-\rho)\bigr)} \biggr) 
\end{align}
and then to obtain (\ref{equ:K_epsilon}) from (\ref{equ:d_K}) using the monotonicity of the sequence $\{d_k\}_{k=0}^\infty$ and of the function $F(\cdot)$. We proceed with the details of each of the steps in the following.

\vspace{5pt}
\noindent \textbf{Step 1:} 
We prove (\ref{equ:d_K}) by showing the lower bound on $K$ first and then showing the upper bound on $K$. Using (\ref{def:f}), we can rewrite (\ref{equ:dn}) as $d_{k+1}=d_k+1/f(d_k)$. Rearranging this equation yields
\begin{align} \label{equ:fdn}
    f(d_k) (d_{k+1} - d_k) = 1 .
\end{align}
Since $f(x)$ is monotonically decreasing, we obtain the lower bound on $K$ in (\ref{equ:d_K}) by
\begin{align*}
    F(d_K) &= \int_0^{d_K} f(x) dx = \sum_{k=0}^{K-1} \int_{d_k}^{d_{k+1}} f(x) dx \\
    &\leq \sum_{k=0}^{K-1} \int_{d_k}^{d_{k+1}} f(d_k) dx = \sum_{k=0}^{K-1} f(d_k) (d_{k+1}-d_k) = K , \numberthis \label{equ:lower_bound}
\end{align*}
where the last equality stems from (\ref{equ:fdn}). For the upper bound on $K$ in (\ref{equ:d_K}), we use the convexity of $f(\cdot)$ to lower-bound $F(d_K)$ as follows
\begin{align*}
    F(d_K) &= \sum_{k=0}^{K-1} \int_{d_k}^{d_{k+1}} f(x) dx \geq \sum_{k=0}^{K-1} \int_{d_k}^{d_{k+1}} \bigl( f(d_k) + f'(d_k) (x-d_k) \bigr) dx \\ &= \sum_{k=0}^{K-1} \Bigl( f(d_k) (d_{k+1}-d_k) + \frac{1}{2} f'(d_k) (d_{k+1}-d_k)^2 \Bigr) . \numberthis \label{equ:F_dK}
\end{align*}
Using (\ref{equ:fdn}) and substituting $f'(x) = -\bigl(f(x)\bigr)^2 \frac{\tau (1-\rho) e^{-x}}{\rho+\tau (1-\rho) e^{-x}}$ into the RHS of (\ref{equ:F_dK}), we obtain
\begin{align} \label{equ:upper_bound0}
    F(d_K) &\geq K - \frac{1}{2} \sum_{k=0}^{K-1} \frac{\tau (1-\rho) e^{-d_k}}{\rho+\tau (1-\rho) e^{-d_k}} .
\end{align}
Note that (\ref{equ:upper_bound0}) already offers an upper on $K$ in terms of $F(d_K)$. To obtain the upper bound on $K$ in (\ref{equ:d_K}) from (\ref{equ:upper_bound0}), it suffices to show that 
\begin{align} \label{equ:sum_frac_dk}
    \sum_{k=0}^{K-1} \frac{\tau (1-\rho) e^{-d_k}}{\rho+\tau (1-\rho) e^{-d_k}} &\leq \frac{1}{\rho} \log \biggl( \frac{\log\rho}{\log\bigl(\rho+\tau(1-\rho)\bigr)} \biggr) . 
\end{align}
In the following, we prove (\ref{equ:sum_frac_dk}) by introducing the functions
\begin{align} \label{def:g}
    g(x) = \frac{\tau (1-\rho) e^{-x}}{\rho+\tau (1-\rho) e^{-x}} \frac{1}{-\log \bigl(\rho + \tau (1-\rho) e^{-x} \bigr)}
\end{align}
and
\begin{align} \label{def:G}
    G(x) = \int_0^{x} g(t) dt =\log\Bigl(\frac{\log(\rho+\tau(1-\rho)e^{-x})}{\log(\rho+\tau(1-\rho))}\Bigr) .
\end{align}
Note that $g(\cdot)$ is monotonically decreasing (a product of two decreasing functions) while $G(\cdot)$ is monotonically increasing (an integral of a non-negative function) on $[0,\infty)$.
We have
\begin{align*}
    G(d_K) &= \int_0^{d_K} g(x) dx = \sum_{k=0}^{K-1} \int_{d_k}^{d_{k+1}} g(x) dx \geq \sum_{k=0}^{K-1} \int_{d_k}^{d_{k+1}} g(d_{k+1}) dx \\
    &= \sum_{k=0}^{K-1} g(d_{k+1}) (d_{k+1}-d_k) = \sum_{k=0}^{K-1} \frac{g(d_{k+1})}{g(d_k)} g(d_k) (d_{k+1}-d_k) . \numberthis \label{equ:GdN}
\end{align*}  

% \newpage
\begin{lemma} \label{lem:fgdn}
For any $k \in \mathbb{N}$, we have ${g(d_{k+1})}/{g(d_k)} \geq \rho$.
\end{lemma}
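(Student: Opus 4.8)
The plan is to reparametrize $\{d_k\}_{k=0}^\infty$ by $s_k \triangleq \rho + \tau(1-\rho) e^{-d_k}$, the quantity inside the logarithm in $g(\cdot)$. Because $\{d_k\}_{k=0}^\infty$ is increasing with $d_0 = 0$, the sequence $\{s_k\}_{k=0}^\infty$ is decreasing and lies in $\bigl(\rho,\, \rho + \tau(1-\rho)\bigr] \subset (\rho, 1)$ (using $0<\tau<1$); in particular $-\log s_k > 0$, so $g(d_k) = \tfrac{s_k - \rho}{-s_k \log s_k} > 0$ and the ratio in the lemma is well defined. The recursion (\ref{equ:dn}) gives $e^{-d_{k+1}} = s_k e^{-d_k}$, hence $s_{k+1} - \rho = s_k (s_k - \rho)$, i.e. $s_{k+1} = s_k^2 - \rho s_k + \rho$. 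Substituting into the definition of $g(\cdot)$ and using $s_{k+1} - \rho = s_k(s_k - \rho)$,
\begin{align*}
    \frac{g(d_{k+1})}{g(d_k)} = \frac{s_{k+1} - \rho}{s_k - \rho}\cdot\frac{s_k \log s_k}{s_{k+1}\log s_{k+1}} = \frac{s_k^2 (-\log s_k)}{s_{k+1}(-\log s_{k+1})} .
\end{align*}
Since the denominator is positive, it suffices to prove the pointwise inequality $s^2(-\log s) \ge \rho\,\sigma(s)\,(-\log\sigma(s))$ for all $s \in (\rho, 1)$, where $\sigma(s) \triangleq s^2 - \rho s + \rho$ is the image of $s$ under one iteration.

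To prove this, I would introduce $\phi(t) \triangleq -t\log t$, which is concave on $(0, \infty)$ since $\phi''(t) = -1/t < 0$, and observe that the target inequality is $s\,\phi(s) \ge \rho\,\phi(\sigma(s))$. From $\sigma(s) - \rho = s(s-\rho) \ge 0$ and $\sigma(s) - s = (s-1)(s-\rho) \le 0$ we get $\sigma(s) \in [\rho, s] \subset (0,1)$, so $\phi(\sigma(s))$ is well defined and nonnegative. Bounding $\phi$ above by its tangent line at $s$ (concavity), $\phi(\sigma(s)) \le \phi(s) + \phi'(s)\bigl(\sigma(s) - s\bigr)$ with $\phi'(s) = -\log s - 1$, and simplifying, the claim reduces to
\begin{align*}
    s^2(-\log s) - \rho\,\sigma(s)\,(-\log\sigma(s)) \ge (s-\rho)\Bigl[\bigl(\rho + s(1-\rho)\bigr)\log\tfrac{1}{s} - \rho(1-s)\Bigr] .
\end{align*}
Applying the elementary bound $\log(1/s) \ge 1 - s$ to the bracket and using $\rho + s(1-\rho) \ge \rho$, the right-hand side is at least $s(1-\rho)(s-\rho)(1-s) \ge 0$ for every $s \in [\rho, 1]$, which proves the scalar inequality and hence the lemma.

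The algebraic simplifications — both in the reduction and after the tangent-line substitution — are routine. The one step that requires care, and where I expect the main difficulty, is the choice of the upper bound for $\phi(\sigma(s))$: the inequality of the lemma becomes an equality in the limit $d_k \to \infty$ (equivalently $s \to \rho^+$, where $\sigma(s) \to \rho$), so any bound that is not also tight there — for instance $-\log s_{k+1} \le 2(-\log s_k)$ via $\sigma(s) \ge s^2$, or $-\log s \le (1-s)/s$ — is too weak and breaks down for $\rho$ close to $1$. The tangent-line bound for the concave function $\phi$ at the point $s$ is exactly tight at $s = \rho$, and with it the remaining verification collapses to the single classical estimate $\log(1/s) \ge 1-s$.
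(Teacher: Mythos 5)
Your proof is correct, and its first half coincides exactly with the paper's: your $s_k$ is the paper's $t_k$, the one-step map $s_{k+1}=\rho+(s_k-\rho)s_k$ is the same, and your ratio identity is precisely (\ref{equ:frac_g}). The divergence is only in how the resulting scalar inequality is closed. The paper factors the ratio as $\frac{s}{\sigma(s)}\cdot\frac{s\log s}{\log \sigma(s)}$, bounds the first factor below by $1$ (discarding it), and proves $\frac{s\log s}{\log\sigma(s)}\geq\rho$ by writing $\sigma(s)/s$ as a convex combination of $1$ and $s$, applying concavity of $\log$ to get $\log\sigma(s)\geq(2-\rho/s)\log s$, and finishing with $2-\rho/s\leq s/\rho$. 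You instead keep the whole product and bound $\phi(\sigma(s))$ for the concave $\phi(t)=-t\log t$ by its tangent line at $s$, reducing everything to $\log(1/s)\geq 1-s$. I verified your algebra: the tangent-line substitution does leave exactly $(s-\rho)\bigl[(\rho+s(1-\rho))\log(1/s)-\rho(1-s)\bigr]$, and the final lower bound $s(1-\rho)(s-\rho)(1-s)\geq 0$ on $[\rho,1]$ is right. Your variant is marginally tighter since it retains the factor $s/\sigma(s)>1$, and your diagnosis of where care is needed is accurate --- the inequality is saturated as $s\to\rho^{+}$, so the bound used there must also be tight at $s=\rho$; the paper meets the same constraint because both $2-\rho/s\leq s/\rho$ and its convex-combination step become equalities at $s=\rho$. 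Either route is a valid proof of the lemma.
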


\begin{proof}
For $k\in \mathbb{N}$, let $t_k = \rho + \tau (1-\rho) e^{-d_k} \in (\rho,1)$. From (\ref{equ:dn}), we have $t_k = e^{-(d_{k+1}-d_k)}$ and $t_{k+1} = \rho + \tau (1-\rho) e^{-d_{k+1}} = \rho + \tau (1-\rho) e^{-d_{k}} e^{-(d_{k+1}-d_k)} = \rho + (t_k-\rho)t_k$.
Substituting $d_k$ for $x$ in $g(x)$ from (\ref{def:g}) and replacing $\rho + \tau (1-\rho) e^{-d_k}$ with $t_k$ yield $g(d_k) = \frac{\tau (1-\rho) e^{-d_{k}}}{t_k} \frac{1}{-\log(t_k)}$.
Repeating the same process to obtain $g(d_{k+1})$ and taking the ratio between $g(d_{k+1})$ and $g(d_k)$, we obtain
\begin{align} \label{equ:g_ratio}
    \frac{g(d_{k+1})}{g(d_k)} = e^{-(d_{k+1}-d_k)} \frac{t_k}{t_{k+1}} \frac{\log(t_k)}{\log(t_{k+1})} .
\end{align}
Substituting $e^{-(d_{k+1}-d_k)} = t_k$ and $t_{k+1} = \rho + (t_k-\rho)t_k$ into (\ref{equ:g_ratio}) yields
\begin{align} \label{equ:frac_g}
    \frac{g(d_{k+1})}{g(d_k)} = \frac{t_k^2 \log(t_k)}{(\rho+(t_k-\rho)t_k)\log(\rho+(t_k-\rho)t_k)} .
\end{align}
We now continue to bound the ratio ${g(d_{k+1})}/{g(d_k)}$ by bounding the RHS. Since $t_k-\rho \ge 0$ and $t_k<1$, we have $t_k-\rho > (t_k-\rho) t_k$ and hence ${t_k}/{(\rho+(t_k-\rho)t_k)} > 1$.
Thus, in order to prove $\frac{g(d_{k+1})}{g(d_k)} \geq \rho$ from the fact that the RHS of (\ref{equ:frac_g}) is greater or equal to $\rho$, it remains to show that 
\begin{align} \label{equ:tp}
    \frac{t_k \log(t_k)}{\log \bigl(\rho+(t_k-\rho)t_k \bigr)} \geq \rho  .
\end{align}
By the concavity of $\log(\cdot)$, it holds that $\log (\frac{\rho}{t_k} 1+\frac{t_k-\rho}{t_k} t ) \geq \frac{\rho}{t_k} \log(1) + \frac{t_k-\rho}{t_k} \log(t_k) = (1-\frac{\rho}{t_k}) \log(t_k)$.
Adding $\log(t_k)$ to both sides of the last inequality yields $\log (\rho+(t_k-\rho)t_k ) \geq (2-\frac{\rho}{t_k}) \log(t_k)$.
Now using the fact that $(\sqrt{\rho/t_k} - \sqrt{t_k/\rho})^2 \geq 0$, we have $2-\rho/t_k \leq t_k/\rho$.
By this inequality and the negativity of $\log(t_k)$, we have $\log (\rho+(t_k-\rho)t_k ) \geq \frac{t_k}{\rho} \log(t_k)$. 
Multiplying both sides by the negative ratio $\rho/\log(\rho+(t_k-\rho)t_k)$ and adjusting the direction of the inequality
yields the inequality in (\ref{equ:tp}), which completes our proof of the lemma. 
% with the equality holding when $t_k=\rho$.
\end{proof}

Back to our proof of Theorem~\ref{theo:scalar}, applying Lemma~\ref{lem:fgdn} to (\ref{equ:GdN}) and substituting $d_{k+1}-d_k = -\log(\rho + \tau (1-\rho) e^{-d_k})$ from (\ref{equ:dn}) and $g(d_k)$ from (\ref{def:g}), we have
\begin{align} \label{equ:G_lo}
     G(d_K) &\geq \sum_{k=0}^{K-1} \rho  g(d_k) (d_{k+1}-d_k) = \rho \sum_{k=0}^{K-1} \frac{\tau (1-\rho) e^{-d_k}}{\rho+\tau (1-\rho) e^{-d_k}} .
\end{align}
Using the monotonicity of $G(\cdot)$, we upper-bound $G(d_K)$ by 
\begin{align} \label{equ:G_up}
    G(d_K) \leq G(\infty) = \log \Bigl( \frac{\log \rho}{\log(\rho+\tau(1-\rho))} \Bigr) .
\end{align}
Thus, the RHS of (\ref{equ:G_lo}) is upper bounded by the RHS of (\ref{equ:G_up}). Dividing the result by $\rho$, we obtain (\ref{equ:sum_frac_dk}).
This completes our proof of the upper bound on $K$ in (\ref{equ:d_K}) and thereby the first step of the proof.

\vspace{5pt}
\noindent \textbf{Step 2:}
We proved both the lower bound and the upper bound on $K$ in (\ref{equ:d_K}).
Next, we proceed to show (\ref{equ:K_epsilon}) using (\ref{equ:d_K}).
By the definition of $K(\epsilon)$, $a_{K(\epsilon)} \leq \epsilon a_0 < a_{K(\epsilon)-1}$. Since $d_k=\log({a_0}/{a_k})$, for $k \in {\mathbb N}$, we have $d_{K(\epsilon)-1} \leq \log(1/\epsilon) \leq d_{K(\epsilon)}$.
On the one hand, using the monotonicity of $F(\cdot)$ and substituting $K=K(\epsilon)$ into the lower bound on $K$ in (\ref{equ:d_K}) yields
\begin{align} \label{equ:N_lo}
    F\bigl(\log(1/\epsilon)\bigr) \leq F(d_{K(\epsilon)}) \leq K(\epsilon) .
\end{align}
On the other hand, substituting $K=K(\epsilon)-1$ into the upper bound on $K$ in (\ref{equ:d_K}), we obtain
\begin{align} \label{equ:KFd1}
    K(\epsilon) - 1 \leq F(d_{K(\epsilon)-1}) + \frac{1}{2\rho}\log \biggl( \frac{\log\rho}{\log \bigl(\rho+\tau(1-\rho)\bigr)} \biggr) .
\end{align}
Since $F(\cdot)$ is monotonically increasing and $d_{K(\epsilon)-1} \leq \log(1/\epsilon)$, we have $F(d_{K(\epsilon)-1}) \leq F(\log(1/\epsilon))$.
Therefore, upper-bounding $F(d_{K(\epsilon)-1})$ on the RHS of (\ref{equ:KFd1}) by $F(\log(1/\epsilon))$ yields
\begin{align} \label{equ:N_up}
    K(\epsilon) \leq F\bigl(\log(1/\epsilon)\bigr) + \frac{1}{2\rho}\log \biggl( \frac{\log\rho}{\log \bigl(\rho+\tau(1-\rho)\bigr)} \biggr) + 1 .
\end{align} 
The inequality (\ref{equ:K_epsilon}) follows on combining (\ref{equ:N_lo}) and (\ref{equ:N_up}).

\subsection{Proof of Lemma~\ref{lem:F_ep}}

Let $\nu=\tau (1-\rho) /\rho$. We represent $f(x)$ in the interval $(0,\log(1/\epsilon))$ as
\begin{align*}
    f(x) &= \frac{1}{-\log \bigl(\rho + \tau (1-\rho) e^{-x} \bigr)} \\
    &= \frac{1}{\log(1/\rho)} + \frac{1}{\log(1/\rho)} \frac{\log (1+\nu e^{-x})}{\log(1/\rho)-\log (1+\nu e^{-x})} .
\end{align*}
Then, taking the integral from $0$ to $\log(1/\epsilon)$ yields
\begin{align*}
    F\bigl(\log(1/\epsilon)\bigr) = \frac{1}{\log(1/\rho)} \biggl( \log(1/\epsilon) + \int_0^{\log(1/\epsilon)} \frac{\log (1+\nu e^{-t})}{\log(1/\rho)-\log (1+\nu e^{-t})} dt \biggr) . \numberthis \label{equ:Fx}
\end{align*}
Using $\alpha(1-\alpha/2)=\alpha-\alpha^2/2 \leq \log(1+\alpha) \leq \alpha$, for $\alpha =\nu e^{-t} \geq 0$, on the numerator within the integral in (\ref{equ:Fx}) and changing the integration variable $t$ to $z = \log(1/\rho)-\log (1+ \nu e^{-t})$, we obtain both an upper bound and a lower bound on the integral on the RHS of (\ref{equ:Fx})
\begin{align*}
    \frac{1}{\rho} \int^{\overline{z}}_{\underline{z}} \frac{e^{-z} - \frac{1}{2\rho}e^{-z}(e^{-z}-\rho)}{z} dz  &\leq \int_0^{\log(1/\epsilon)} \frac{\log (1+\nu e^{-t})}{\log(1/\rho)-\log (1+\nu e^{-t})} dt \\
    &\leq \frac{1}{\rho} \int^{\overline{z}}_{\underline{z}} \frac{e^{-z}}{z} dz , \numberthis \label{equ:exp_int}
\end{align*}
where $\underline{z} = -\log(\rho+\tau(1-\rho))$ and $\overline{z} = -\log(\rho+\epsilon\tau(1-\rho))$.
Replacing the integral in (\ref{equ:Fx}) by the upper bound and lower bound from (\ref{equ:exp_int}), using the definition of the exponential integral, and simplifying, we obtain the upper-bound on $F(\log(1/\epsilon))$ given by $\overline{F}_1(\log(1/\epsilon))$ in (\ref{equ:F_ep1}) and similarly the lower bound on $F(\log(1/\epsilon))$ given by $\underline{F}_1(\log(1/\epsilon))$ in (\ref{equ:F_ep1lo}).
Finally, we prove the second upper bound in (\ref{equ:F_ep}) as follows. Since $E_1(\cdot)$ is monotonically decreasing and $\frac{1}{\rho+\epsilon\tau(1-\rho)} \le \frac{1}{\rho}$, we have $E_1(\log\frac{1}{\rho+\epsilon\tau(1-\rho)}) \ge E_1(\log\frac{1}{\rho})$, which implies $\Delta E_1 (\log\frac{1}{\rho+\tau(1-\rho)} , \log\frac{1}{\rho+\epsilon\tau(1-\rho)}) \leq \Delta E_1 (\log\frac{1}{\rho+\tau(1-\rho)} , \log\frac{1}{\rho})$. Combining this with the definition of $\overline{F}_1(\log(1/\epsilon))$ and $\overline{F}_2(\log(1/\epsilon))$ in (\ref{equ:F_ep1}) and (\ref{equ:F_ep}), respectively, we conclude that $\overline{F}_1(\log(1/\epsilon)) \leq \overline{F}_2(\log(1/\epsilon))$, thereby completes the proof of the lemma.

\section{Proof of Theorem~\ref{theo:vector}}
\label{sec:proof2}

Let $\tilde{\bm \delta}^{(k)} = \bm Q^{-1} \bm \delta^{(k)}$ be the transformed error vector. Substituting $\mathcal{T}(\bm \delta^{(k)}) = \bm Q \bm \Lambda \bm Q^{-1} (\bm \delta^{(k)})$ into (\ref{equ:nonlinear}) and then left-multiplying both sides by $\bm Q^{-1}$, we obtain
\begin{align} \label{equ:delta_tilde}
    \tilde{\bm \delta}^{(k+1)} = \bm \Lambda \tilde{\bm \delta}^{(k)} + \tilde{\bm q}(\tilde{\bm \delta}^{(k)}) ,
\end{align}
where $\tilde{\bm q}(\tilde{\bm \delta}^{(k)}) = \bm Q^{-1} \bm q(\bm Q \tilde{\bm \delta}^{(k)})$ satisfies $\norm{\tilde{\bm q}(\tilde{\bm \delta}^{(k)})} \leq q \norm{\bm Q^{-1}}_2 \norm{\bm Q}_2^2 \norm{\tilde{\bm \delta}^{(k)}}^2$.
Taking the norm of both sides of (\ref{equ:delta_tilde}) and using the triangle inequality yield
\begin{align*}
    \norm{\tilde{\bm \delta}^{(k+1)}} &\leq \norm{\bm \Lambda \tilde{\bm \delta}^{(k)}} + \norm{\tilde{\bm q}(\tilde{\bm \delta}^{(k)})} \\
    &\leq \norm{\bm \Lambda}_2 \norm{\tilde{\bm \delta}^{(k)}} + q \norm{\bm Q^{-1}}_2 \norm{\bm Q}_2^2 \norm{\tilde{\bm \delta}^{(k)}}^2
\end{align*}
Since $\norm{\bm \Lambda}_2 = \rho(\mathcal{T})$, the last inequality can be rewritten compactly as
\begin{align}
    \norm{\tilde{\bm \delta}^{(k+1)}} \leq \rho \norm{\tilde{\bm \delta}^{(k)}} + \tilde{q} \norm{\tilde{\bm \delta}^{(k)}}^2 ,
\end{align}
where $\rho=\rho(\mathcal{T})$ and $\tilde{q} = q \norm{\bm Q^{-1}}_2 \norm{\bm Q}_2^2$.

To analyze the convergence of $\{\norm{\tilde{\bm \delta}^{(k)}}\}_{k=0}^\infty$, let us consider a surrogate sequence $\{a_k\}_{k=0}^\infty \subset \R$ defined by $a_{k+1} = \rho a_k + \tilde{q} a_k^2$ with $a_0=\norm{\tilde{\bm \delta}^{(0)}}$.
We show that $\{a_k\}_{k=0}^\infty$ upper-bounds $\{\norm{\tilde{\bm \delta}^{(k)}}\}_{k=0}^\infty$, i.e.,
\begin{align} \label{equ:ak}
    \norm{\tilde{\bm \delta}^{(k)}} \leq a_k \quad \forall k \in \mathbb{N}  .
\end{align}
The base case when $k=0$ holds trivially as $a_0=\norm{\tilde{\bm \delta}^{(0)}}$.
In the induction step, given $\norm{\tilde{\bm \delta}^{(k)}} \leq a_k$ for some integer $k \geq 0$, we have
\begin{align*}
    \norm{\tilde{\bm \delta}^{(k+1)}} &\leq \rho \norm{\tilde{\bm \delta}^{(k)}} + \tilde{q} \norm{\tilde{\bm \delta}^{(k)}}^2 \leq \rho a_k + \tilde{q} a_k^2 = a_{k+1} .
\end{align*}
By the principle of induction, (\ref{equ:ak}) holds for all $k \in \mathbb{N}$.
Assume for now that $a_0 = \norm{\tilde{\bm \delta}^{(0)}} < (1-\rho)/\tilde{q}$, then applying Theorem~\ref{theo:scalar} yields $a_k \leq \tilde{\epsilon} a_0$ for any $\tilde{\epsilon}>0$ and integer $k \geq {\log(1/\tilde{\epsilon})}/{\log(1/\rho)} + c(\rho,\tau)$.
Using (\ref{equ:ak}) and setting $\tilde{\epsilon} = \epsilon/\kappa(\bm Q)$, we further have $\norm{\tilde{\bm \delta}^{(k)}} \leq a_k \leq \tilde{\epsilon} a_0 = \epsilon \norm{\tilde{\bm \delta}^{(0)}} / \kappa(\bm Q)$ for all
\begin{align} \label{equ:kk}
    k \geq \frac{\log(1/\epsilon)+\log(\kappa(\bm Q))}{\log(1/\rho)} + c\Bigl( \rho,\frac{\tilde{q} \norm{\tilde{\bm \delta}^{(0)}}}{1-\rho} \Bigr) .
\end{align}
Now, it remains to prove \textit{(i)} the accuracy on the transformed error vector $\norm{\tilde{\bm \delta}^{(k)}} \leq \tilde{\epsilon} \norm{\tilde{\bm \delta}^{(0)}}$ is sufficient for the accuracy on the original error vector $\norm{{\bm \delta}^{(k)}} \leq \epsilon \norm{{\bm \delta}^{(0)}}$; and \textit{(ii)} the initial condition $\norm{{\bm \delta}^{(0)}} < (1-\rho)/(q \kappa(\bm Q)^2)$ is sufficient for $\norm{\tilde{\bm \delta}^{(0)}} < (1-\rho)/\tilde{q}$.
In order to prove \textit{(i)}, using $\norm{\tilde{\bm \delta}^{(k)}} \leq \epsilon \norm{\tilde{\bm \delta}^{(0)}}/\kappa(\bm Q)$, we have
\begin{align*}
    \norm{\bm \delta^{(k)}} = \norm{\bm Q \tilde{\bm \delta}^{(k)}} &\leq \norm{\bm Q}_2 \norm{\tilde{\bm \delta}^{(k)}} \\
    &\leq \norm{\bm Q}_2 \frac{\epsilon}{\norm{\bm Q}_2 \norm{\bm Q^{-1}}_2} \norm{\tilde{\bm \delta}^{(0)}} = \frac{\epsilon}{\norm{\bm Q^{-1}}_2} \norm{\tilde{\bm \delta}^{(0)}} \leq \epsilon \norm{\bm \delta^{(0)}} ,
\end{align*}
where the last inequality stems from $\norm{\tilde{\bm \delta}^{(0)}} = \norm{\bm Q^{-1} {\bm \delta}^{(0)}} \leq \norm{\bm Q^{-1}}_2 \norm{\bm \delta^{(0)}}$.
To prove \textit{(ii)}, we use similar derivation as follows
\begin{align*}
    \norm{\tilde{\bm \delta}^{(0)}} \leq \norm{\bm Q^{-1}}_2 \norm{\bm \delta^{(0)}} < \norm{\bm Q^{-1}}_2  \frac{1-\rho}{q \kappa(\bm Q)^2} = \frac{1-\rho}{\tilde{q}} .
\end{align*}
Finally, the case that $\bm T$ is symmetric can be proven by the fact that $\bm Q$ is orthogonal, i.e., $\bm Q^{-1} = \bm Q^T$ and $\kappa(\bm Q)=1$.
Substituting this back into (\ref{equ:dTk}) and using the orthogonal invariance property of norm, we obtain the simplified version in (\ref{equ:dTk_sym}).
This completes our proof of Theorem~\ref{theo:vector}.

\end{appendices}

%%===========================================================================================%%
%% If you are submitting to one of the Nature Portfolio journals, using the eJP submission   %%
%% system, please include the references within the manuscript file itself. You may do this  %%
%% by copying the reference list from your .bbl file, paste it into the main manuscript .tex %%
%% file, and delete the associated \verb+\bibliography+ commands.                            %%
%%===========================================================================================%%

\bibliography{sn-bibliography}% common bib file
%% if required, the content of .bbl file can be included here once bbl is generated
%%\input sn-article.bbl

%% Default %%
%%\input sn-sample-bib.tex%

% page limit:
% https://link.springer.com/content/pdf/10.1007/s11590-021-01762-9.pdf
% https://link.springer.com/content/pdf/10.1007/s11590-022-01872-y.pdf
% https://link.springer.com/content/pdf/10.1007/s11590-022-01867-9.pdf
% https://link.springer.com/content/pdf/10.1007/s11590-021-01740-1.pdf

\end{document}